\newtheorem{lemma}{\textbf{Lemma}}
\begin{document}
\title{Modeling and Analysis for Cache-enabled Cognitive D2D Communications in  Cellular Networks}
\author{Chenchen Yang$^{*}$, Xingya Zhao$^{*}$,  Yao Yao$^{\ddag}$, Bin Xia$^{*}$\\
$^{*}$ Department of Electronic Engineering, Shanghai Jiao Tong University, Shanghai, P. R. China\\
$^{\ddag}$ Huawei Technologies Co., Ltd\\
Email: {\{zhanchifeixiang, claret.zhao, bxia\}@sjtu.edu.cn, yyao@eee.hku.hk}
\thanks{This work is supported in part by the National High Technology
Research and Development Program of China under 863 5G (Grant NO.
2014AA01A702), and the Shenzhen-Hong Kong Innovative Technology Cooperation
Funding (Grant No. SGLH20131009154139588).}}
\maketitle
\begin{abstract}
Exploiting cognition to the cache-enabled device-to-device (D2D) communication underlaying the multi-channel cellular network is the main focus of this paper. D2D pairs perform direct communications via sensing the available cellular channels, bypassing the base station (BS). Dynamic service is considered and the network performance is evaluated with the stochastic geometry. Node locations are first modeled as
mutually independent Poisson Point Processes, and the service queueing process is formulated. Then the corresponding tier
association and cognitive access protocol are developed. The delay and the length for the queue at the BS and D2D transmitter are further elaborated, with modeling the traffic dynamics of request arrivals and departures as the \emph{discrete-time multiserver queue with priorities}. Moreover, impacts of the physical layer and content-centric features on the system performance are jointly investigated  to provide a valuable insight.
\end{abstract}
\section{INTRODUCTION}
As an essential candidate feature of the fifth generation (5G) network, device-to-device (D2D) communications is an excellent technology for enhancing the cellular coverage and capacity \cite{jsacd2d}. Employing D2D communication in cellular networks with the concepts of cognitive radio networks is one of the schemes that are being intensively studied for future networks \cite{cognitive2, cognitive1}. On the other hand, despite the totally huge amount of data traffic suffering by the network, only a small portion $(5 - 10\%)$  of  ``popular"  data are frequently accessed by the majority of the users \cite{Zipf}. Caching popular contents at terminal devices allows frequency local sharing via D2D, yielding  the traffic offloading from the cellular network  and the reduction of duplicated data transmissions \cite{diversity}.

D2D communications in cellular networks are either network-assisted or assistance-free \cite{jsacd2d}. In the former, the base stations (BS) takes centralized control to allocate the resources for both cellular and D2D communications, which  may incur dramatically high computational and signaling overhead, especially in the high-density network with large numbers of BSs,  cellular and D2D users \cite{jsacd2d, underlay1}. On the other hand, the assistance-free D2D advocates to manage the communication with distributed D2D terminals bypassing BSs. Compared with centralized approaches, distributed resource allocation schemes incorporating the concept of cognitive networks for autonomous resource allocation will be more scalable \cite{jsacd2d, cognitive1}.

Cognition is exploited for the assistance-free communication in \cite{cite1}. The D2D transmitter (TX) and receiver set up direct communication without the supervision from the BS via the PC5 interface, which is defined by the  Long Term Evolution (LTE) standard. \cite{jsacd2d}  focus on mixed overlay-underlay cognitive  D2D communications where devices sense the spectrum based on the energy detection. In \cite{harvest}, the cognitive D2D TXs can communicate with the harvest energy from the ambient interference via the cellular channel. Both the random spectrum access and prioritized spectrum access policies are investigated. Authors in \cite{CSMA} introduce cognition to femto access points to exploit the idle channels of the macro BSs via spectrum sensing, avoiding severe interference. \cite{cog2} introduces the spectrum trading scheme to the cognitive D2D and cellular network coexisting system based on game theory and learning method.

Cache-enabled D2D transmission implies that data can be cached at the terminal devices for content sharing via D2D \cite{5G2}. \cite{scaling} identifies the connection between the collaboration distance and the interference in a cache-enabled D2D network with considering the video content popularity statistics. \cite{grid} divides the cell into virtual square grids to make use of the caching capacity of terminal devices, exploiting the redundancy of user requests. The performance of different cluster centric content placement policies in a cache-enabled D2D network is analyzed in \cite{arxiv2015}, where the location of devices are modeled as a Poisson cluster process. In \cite{trade-off}, disjoint circular clusters are set based on hard-core point process. Requesting users access contents from cache-enabled users in the same cluster via out-of-band D2D in the cellular network.

Nevertheless, compared to the network-assisted D2D, assistance-free D2D underlaying cellular networks has not been investigated enough. Moreover, few studies exploit cognition to the cache-enabled D2D communications. Modeling and exploring the performance of the cellular network embedded with the cognitive and cache-enabled D2D communication is the focus of this paper. The main contributions of this paper are summarized as follows:
\begin{itemize}
\item \emph{Introducing cognition to the cache-enabled D2D underlaying cellular networks:}
  Cache-enabled users (active as D2D TXs) and BSs are cooperative to transmit contents. Assistance-free D2D communication is considered and D2D pairs perform direct communications via spectrum sensing in the multi-channel cellular environment.
\item \emph{Dynamic service analysis with the stochastic geometry:}
 We use tools of  the stochastic geometry to model the node locations. Different from the traditional assumption in the stochastic geometric analysis that service nodes are active at the full-load state, we consider the dynamic service and the request queueing process is formulated.
\item \emph{Evaluating the performance and facilitating the straightforward network configuration:}
  The probability mass function (PMF) of the delay and the length for the queue at the BS and the D2D TX are derived, modeling the traffic dynamics of request arrivals and departures at the BS and D2D TX  as the \emph{discrete-time multiserver queue with priorities}.
\item \emph{Highlighting the joint impacts of  the physical layer and content-centric features:} Impacts of the content-centric features (e.g., content popularity, the request intensity and the limited caching storage)  and physical layer parameters (e.g., node densities, channel fading and transmission powers) on the network performance are investigated.
\end{itemize}
\section{System model and Protocol description}\label{sec:system}
In this section, the network architecture is first elaborated. Then the tier association protocol and the cognitive channel access are further developed.
\subsection{Network Architecture}
We consider a network as illustrated in Fig. \ref{structure}, where BSs and users are spatially distributed according to two mutually independent homogeneous PPPs $\Phi_i= \{b_{i,j}, j=0,1,2,...\}$ with intensity ${\lambda}_i$ ($i = 0,2$ for users and BSs, respectively) \cite{ppp2}, and $\lambda_0\gg\lambda_2$. A proportion ($0<\alpha<1$) of users are cache-enabled and they can be active as D2D TXs, yielding a thinning homogeneous PPP $\Phi_1= \{b_{1,j}, j=0,1,2,...\}$ with intensity ${\lambda}_1=\alpha{\lambda}_0$.

Users request contents from a given library $\{1,2,...,N\}$ and all contents are of equal length $L$ $[$bits$]$.  Every cache-enabled user has cache of the same size  $M \times L$ $[$bits$]$, where $M \ll N$.  The request interarrival times of a user are exponentially distributed random variables with mean $\frac{1}{\lambda_u}$ seconds \cite{Queue}, i.e., requests of a single user is a Poisson process with parameter $\lambda_u$ $[$requests/s$]$.
 The requested probability of the $i$-th content is triggered according to the Zipf distribution \cite{Zipf,scaling} as follow
\begin{equation}\label{zipf}
f_{i}={\frac{1/i^{\nu}}{\sum_{j=1}^{N}1/j^{\nu}}},
\end{equation}
where $\nu \geq 0$ reflects the difference of the access frequency to different contents. Large $\nu$ implies that the access to a few of  popular contents accounts for the majority of the total access traffic. The content with lower ranking (larger $i$) is more rarely accessed by users.
\begin{figure}[t]
\centering
\includegraphics[width=2.5in]{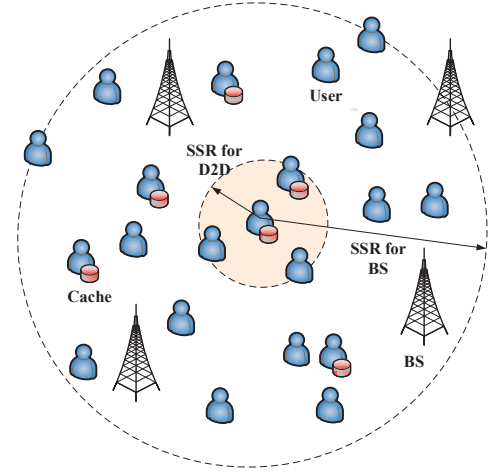}
\caption{The cognitive and cache-enabled D2D transmission.}
\label{structure}
\end{figure}
\subsection{Tier Association  Protocol}\label{subsec:protocol}
When the networks is off-peak, e.g., in the nighttime, BSs proactively broadcast  the most popular contents  to the cache-enabled users.  All the cache-enabled users pre-cached the same copy of the pushed contents, i.e. the $M$  most popular contents. The tier association protocol is jointly decided by the following physical layer and the content-centric features.
\subsubsection{\textbf{Physical layer features}} We consider the node providing the highest long-term average received signal strength to a requesting user as the ``closest'' node of the user. And the received signal strength is  \cite{ppp2},
 \begin{equation}
T_{i}=\kappa B_iP_ir_i^{-\beta}, ~\text{for}~ i=1,2,
 \end{equation}
Where $\beta\geq 2$ is the path-loss exponent and $r_i$ denotes the distance between the requesting user and its nearest node in the $i$-th tier. $P_i$ for $i=1,2$ are the transmit power of a D2D TX and a BS, respectively. For clarity, both the association bias factor $B_{i}$ and  the propagation constant $\kappa$ are normalized as $1$ in this paper. So the ``closest'' node is $\arg\max_{i}T_{i}$.
\subsubsection{\textbf{Content-centric features}} If a request is triggered by a user who is not cache-enabled,  the user associates to its ``closest'' node\footnote{Note that when its ``closest'' node is a D2D TX, if the D2D TX has cached the requested content, then a \emph{cache-hit} event happens  and the user obtains the content via the D2D TX; otherwise, the \emph{cache-loss} event happens because all the other D2D TXs  have not  cached the requested content either, and the user has to skip over D2D TXs to associate to the BS tier.}.  If the requesting user is cache-enabled but has not cached the requested content, the user accesses to its ``closest'' BS.
If the requesting user is cache-enabled and the requested content has been cached in its local caching storage, the user obtains the content immediately.

Therefore,  in a snapshot of the network, requesting users can be clarified into the following three non-overlapping Subsets according to where (its local caching, the D2D TX or the BS) it obtains the requested content.

\textbf{Subset 0:} The user who can obtain the requested content from its local caching immediately.

\textbf{Subset 1:} The user who can  obtain the requested content from the D2D TXs. It implies that the requested content has been cached in the user's ``closest'' D2D TXs.

\textbf{Subset 2:} The user who can  obtain the requested content from the BS. It includes two possible cases, the first case is that the BS is the ``closest'' server of the user. The other case is that the user is not cache-enabled, and its ``closest'' server is a D2D TX who yet has not cached the requested content (which means the requested content has not cached in the D2D tier). Then the user has to access the ``closest'' BS.

The cellular network without caching is considered as a baseline in this paper. In the baseline,
 users do not have caching ability and they could not pre-cache the popular contents. The local sharing links via D2D communications among users can not work at this time. All the users need to access the BS for the content.
\subsection{Priority Based Cognitive Channel Access}
A set of orthogonal channels $C = \{c_i, i=1, 2, ...,|C|\}$ are available in the downlink.  Different BSs can reuse all the channels and no channel is reused within the same cell. Time is divided into slots with fixed-length interval $\tau$ for the BS and  D2D TXs to sent contents, we consider $\tau=1$ second for analytical tractability. A BS assigns one channel to every received request at the beginning of each time slot. The requests that arrive to the BS in the middle of a time slot will be assigned channels until the next time slot. Each BS assigns channels to received requests from $c_1$ to $c_{|C|}$ in sequence, i.e.,  channel $c_{i+1}$ will not be assigned before channel $c_i$.
\begin{figure}[t]
\centering
\includegraphics[width=3in]{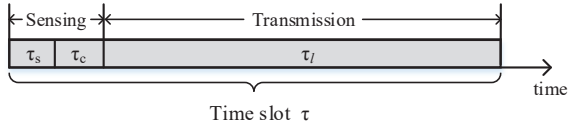}
\caption{Time allocation of a slot for the D2D sensing and transmission.}
\label{slot}
\end{figure}

The cache-enabled users take the free channels which are not used by the nearby BSs and the nearby cache-enabled users, to share the local cached contents to the requesting users via D2D.  Cache-enabled users are cognitive and access the available channels around them opportunistically \cite{jsacd2d, cognitive1, underlay1, cite1}. A channel is available to a cache-enabled user if and only if its received power on this channel, whether provided by a BS node or a cache-enabled user, is less than the spectrum sensing threshold $\gamma$. For a specific cache-enabled user, the spectrum sensing threshold determines its SSR, in which the received power from any other transmit node is larger than $\gamma$, yielding the non-reusable of the corresponding  channel. The larger the value of the spectrum sensing threshold $\gamma$, the smaller the SSR area, and vice versa. Because the transmit power of the BS is larger than that of the D2D TX, the average radius of the SSR for BSs are larger than that for D2D TXs as in Fig. \ref{structure} (we shall note that the instantaneous SSRs are random shapes because of the channel fading). Then each cache-enabled user serves its received requests according to the following Steps in a slot as illustrated in Fig. \ref{slot} \cite{CSMA}:

\textbf{Step 1:} The cache-enabled user senses the spectrum during the sensing time ${\tau}_s\ll\tau$ and determines the free channels not used by the BSs within its SSR with regard to BSs (called BS SSR in the following);

\textbf{Step 2:} Select a number of channels from the available channels sensed in Step 1. Persistently sense each of them for a random duration $t$ uniformly distributed in the region $[0,\tau_c]$, where $\tau_c\ll\tau$. If there is still any channel available after the sensing duration (i.e., no nearby cache-enabled user accesses the channel during $t \in [0,\tau_c]$ ), go to Step 3; otherwise, delay the content transmission to the next time slot.

\textbf{Step 3:} During the interval $\tau_l$, D2D TXs access the available channels for D2D transmission to the firstly unresponded requests. Note that the D2D TX and receiver can perform direct communications over the PC5 interface defined by the LTE standard, bypassing the supervision of the BS \cite{jsacd2d, cite1}.

We consider perfect spectrum sensing  and the same spectrum access procedure is repeated in each time slot \cite{CSMA}.
For a cache-enabled user, how many channels it tries to occupy depends on the number of received requests it has to respond in this time slot. These requests include the newly received ones in the duration of the last time slot, and the ones which are received in more previous slots but have not been responded due to the shortage of available channels.
\section{PROBLEM FORMULATION AND ANALYSIS}
In this section, we will derive the tier association probabilities for a typical user. Then the number of users associated to each BS are analyzed. The delay and the queue length at the BS and the D2D TX are further evaluated with the modeling of the \emph{discrete-time multiserver queue with priorities}.
\subsection{Tier association probability}
We have modeled a two-tier network above, where the cellular network is overlaid with D2D tier. The probability that the average received signal strength from the $i$-th tier of a user is higher than that from the $j$-th tier is \cite{ppp2,twcmy}
\begin{align}
 \mathrm{\mathbb{P}(T_{i}>T_{j})}=\left[{\sum\limits_{k=1}^2\frac{\lambda_k}{\lambda_i}\left(\frac{P_k}{P_i}\right)^{\frac{2}{\beta}}}\right]^{-1}, i\neq j\in\{1,2\}.\label{probability_2}
\end{align}	
For a randomly selected requesting user, which Subset it belongs to is jointly decided by the physical layer and content-centric features. In a specific time slot, the portions $\mathbb{P}_{ui}$ for the requesting users in the corresponding Subsets are \cite{twcmy}
\begin{equation}
 \left\{
\begin{array}{ll}
\mathbb{P}_{u0}=\alpha\mathbb{P}_{h} \\
\mathbb{P}_{u1}=(1-\alpha)\mathbb{P}_{h}\mathbb{P}(T_1>T_2)\\
\mathbb{P}_{u2}=(1-\mathbb{P}_{h}) + (1-\alpha)\mathbb{P}_{h}[1-\mathbb{P}(T_1>T_2)],
\end{array}
 \right.
\end{equation}
where $\mathbb{P}_{h}=\sum_{i=1}^Mf_i$ is the probability that the \emph{cache-hit} event happens, i.e., the requested content has been cached in the caching storage. $\mathbb{P}_{ui}$ for $i=0,1,2$ also can be viewed as the probability that a randomly selected requesting user can obtain its requested content from its local caching, the D2D TX and the BS, respectively.
\subsection{The Number of Users Associated to each BS}
As explained above, we have divided users into three non-overlapping Subsets in each time slot. With marking users based on which Subset they belong to, we approximately get three  independently thinning PPPs $\Phi_{ui}$ with intensity $\lambda_{ui}=\mathbb{P}_{ui}\lambda_0$ for $i=0,1,2$ in the $\mathbb{R}^2$ plane \cite{CSMA}. That is, the locations of users who obtain their requested contents from local caching, D2D TXs and BSs  in each slot are spatially distributed according to mutually independent PPPs with intensity  $\lambda_{ui}$ for $i=0,1,2,$ respectively.

Therefore,  the number of users associated to a BS can be formulated within the homogeneous BS tier, where the BS cells are polygonal and form the Voronoi tessellation in $\mathbb{R}^2$. The size of the Voronoi cell area is a random variable, and its probability density function (PDF) can be accurately predicted by the gamma distribution \cite{CSMA}
 \begin{equation}\label{ps}
f_S(s)=\frac{(\lambda_2 K)^Ks^{K-1}e^{-\lambda_2Ks}}{\Gamma(K)}, 0\leq s<\infty,
 \end{equation}
where $s$ denotes the cell size and $\Gamma(\cdot)$ is the gamma function. And  $K=3.575$ is a constant factor.
Then with the property of the PPP, the number of users $N_2$ in a BS cell  conditioning on the cell size is a Poisson random variable, which can be generated with the conditioned PMF
\begin{equation}\label{pconditon}
\mathbb{P}\left(\begin{array}{c|}
\!\!\!N_2=n\!
\end{array}~S=s\right)\!=\!\frac{(\lambda_{u2}s)^n\text{exp}\{\!-\!\lambda_{u2}s\}}{n!}, n=1,2,...,
\end{equation}
we then have the following lemma  \cite{CSMA}.
\begin{lemma}
The PMF of the number of users associated to a BS is
\begin{align}
&\mathbb{P}(N_2=n)=\frac{\lambda_{u2}^n(K\lambda_2)^K}{(\lambda_{u2}+K\lambda_2)^{K+n}}\frac{\Gamma(K+n)}{\Gamma(n+1)\Gamma(K)}.
\end{align}
\end{lemma}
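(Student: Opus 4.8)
The plan is to obtain $\mathbb{P}(N_2=n)$ by deconditioning the Poisson PMF in \eqref{pconditon} with respect to the random cell size $S$, whose density is the gamma law in \eqref{ps}. By the law of total probability,
\begin{equation}
\mathbb{P}(N_2=n)=\int_0^\infty \mathbb{P}\!\left(\begin{array}{c|}\!\!\!N_2=n\!\end{array} S=s\right) f_S(s)\,ds .
\end{equation}
First I would substitute the two expressions, use $n!=\Gamma(n+1)$, and pull every factor that does not depend on $s$ outside the integral, which leaves
\begin{equation}
\mathbb{P}(N_2=n)=\frac{\lambda_{u2}^{\,n}(\lambda_2 K)^K}{\Gamma(n+1)\Gamma(K)}\int_0^\infty s^{\,n+K-1}\,e^{-(\lambda_{u2}+\lambda_2 K)s}\,ds .
\end{equation}

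Next I would recognize the remaining integral as a standard Gamma integral, $\int_0^\infty s^{a-1}e^{-bs}\,ds=\Gamma(a)/b^{\,a}$, which is valid here because $a=n+K>0$ and $b=\lambda_{u2}+\lambda_2 K>0$ (positivity holds since $K=3.575$, $\lambda_2>0$, and $\lambda_{u2}=\mathbb{P}_{u2}\lambda_0>0$). This produces the factor $\Gamma(n+K)/(\lambda_{u2}+\lambda_2 K)^{\,n+K}$, and collecting the constants — noting $\lambda_2 K=K\lambda_2$ — yields exactly the claimed expression.

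Candidly, the computation is elementary once the setup is in place, so there is no genuine obstacle; the proof is essentially a one-line integral. The only points worth a sentence of care are: (i) the fact that $N_2\mid S$ is Poisson with mean $\lambda_{u2}s$ rests on the independent-thinning and gamma-approximated-Voronoi modeling assumptions inherited from the preceding subsection, not something to be established here; and (ii) one should observe that the resulting $\{\mathbb{P}(N_2=n)\}_{n\ge 0}$ is indeed a valid PMF, being the negative-binomial-type mass $\binom{n+K-1}{n}p^{K}(1-p)^{n}$ with $p=\lambda_2 K/(\lambda_{u2}+\lambda_2 K)$, which sums to one by the generalized binomial series. I would add this identification as a short remark, since it also gives the sanity check $\mathbb{E}[N_2]=\lambda_{u2}/\lambda_2$, consistent with the mean number of served users per cell equalling the ratio of the relevant user density to the BS density.
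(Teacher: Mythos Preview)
Your proposal is correct and follows exactly the same route as the paper: decondition the Poisson PMF \eqref{pconditon} against the gamma cell-size density \eqref{ps} via $\mathbb{P}(N_2=n)=\int_0^\infty \mathbb{P}(N_2=n\mid S=s)f_S(s)\,ds$, then evaluate the resulting Gamma integral. The paper's proof merely states this integral and declares the result, so your write-up is in fact more detailed (and the negative-binomial identification with $\mathbb{E}[N_2]=\lambda_{u2}/\lambda_2$ is a nice addendum).
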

\begin{proof}
By considering (\ref{ps}) and (\ref{pconditon}), we have
$\mathbb{P}(N_2=n)=\int_{0}^\infty\mathbb{P}\left(\begin{array}{c|}
                      N_2=n
                   \end{array}~S=s\right)f_{S}(s)\mathrm{d}{s}$, we obtain the lemma.
\end{proof}
\subsection{Number of BSs and D2D TXs  in the SSR}\label{sec:number}
The spectrum sensing threshold $\gamma$ defines the SSR around a reference D2D TX in the two-tier heterogeneous network (the D2D tier and the BS tier).  In the SSR of a reference D2D TX, the number of D2D TXs (the reference TX is not included) and BSs $N_{\gamma i},i=1,2$  can be described as
\begin{equation}
N_{{\gamma}i}=\sum_{b_{i,j}\in\Phi_i}\mathbf{1}\left\{P_i h_{i,j}r_{i,j}^{-\beta}>\gamma\right\},
\end{equation}
where $r_{i,j}$ is the distance from the node $b_{i,j} \in \Phi_i$ for $i=1,2$ to the reference D2D TX. We consider Rayleigh fading channels. The channel gains from the $j$-th node in the $i$-th tier to the reference D2D TX are  $h_{i,j}\sim\text{Exp}(\mu)$, and they are independent with each other. We then have the following lemma \cite{CSMA}.
\begin{lemma}
The number of D2D TXs and BSs $N_{\gamma i}$ ($i=1,2$)  in the SSR of the reference D2D TX follow the Poisson distribution with intensity $\lambda_{\gamma i} = \pi\lambda_i(\frac{P_i}{\gamma\mu})^{\frac{2}{\beta}}\Gamma(1+\frac{2}{\beta})$, respectively.
\end{lemma}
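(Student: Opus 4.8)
The plan is to express each $N_{\gamma i}$ as the number of points of an \emph{independently thinned} Poisson point process and then apply the thinning theorem, after first pinning down the law of the two tiers seen from the reference D2D TX. Since the reference TX is itself a point of $\Phi_1$, I would begin by conditioning on its presence at (say) the origin and invoke Slivnyak's theorem: the reduced process $\Phi_1\setminus\{\text{reference TX}\}$ is again a homogeneous PPP of intensity $\lambda_1$, the process $\Phi_2$ is unaffected (still homogeneous with intensity $\lambda_2$), and the two remain mutually independent, as do their fading mark families $\{h_{1,j}\}$ and $\{h_{2,j}\}$.

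Next, fix a tier $i$. A node $b_{i,j}$ lying at distance $r$ from the origin belongs to the SSR if and only if $P_i h_{i,j} r^{-\beta}>\gamma$. Because the marks $h_{i,j}\sim\mathrm{Exp}(\mu)$ are i.i.d.\ and independent of the point pattern, this is exactly an independent thinning of $\Phi_i$ with location-dependent retention probability
\[
p_i(r)=\mathbb{P}\!\left(h>\tfrac{\gamma r^{\beta}}{P_i}\right)=\exp\!\left(-\tfrac{\mu\gamma}{P_i}\,r^{\beta}\right).
\]
By the thinning theorem the retained nodes form a (non-homogeneous) PPP, so $N_{\gamma i}$ is Poisson distributed with mean
\[
\lambda_{\gamma i}=\lambda_i\!\int_{\mathbb{R}^2}p_i(|x|)\,\mathrm{d}x=2\pi\lambda_i\!\int_0^{\infty}\!r\,\exp\!\left(-\tfrac{\mu\gamma}{P_i}\,r^{\beta}\right)\mathrm{d}r .
\]
Evaluating the integral with the substitution $t=(\mu\gamma/P_i)\,r^{\beta}$ turns it into a Gamma integral and, using $x\Gamma(x)=\Gamma(x+1)$ with $x=2/\beta$, yields $\lambda_{\gamma i}=\pi\lambda_i\big(\tfrac{P_i}{\gamma\mu}\big)^{2/\beta}\Gamma\!\big(1+\tfrac{2}{\beta}\big)$, which is the claimed intensity.

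Finally, the joint independence of $N_{\gamma 1}$ and $N_{\gamma 2}$ is immediate: after the Slivnyak conditioning they are measurable functionals of the independent objects $(\Phi_1\setminus\{\text{reference TX}\},\{h_{1,j}\})$ and $(\Phi_2,\{h_{2,j}\})$. The only step that needs genuine care rather than being a routine computation is this first conditioning: one must justify via Slivnyak's theorem that singling out the reference D2D TX does not perturb the law of the remaining interferers; once that is in place, the rest is the standard PPP ``thin and integrate'' recipe, and the closed form for $\lambda_{\gamma i}$ drops out from a one-line change of variables.
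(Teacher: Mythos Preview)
Your argument is correct. The paper, however, does not invoke the thinning theorem; instead it computes the moment generating function $\mathbb{E}[e^{tN_{\gamma i}}]$ directly via the probability generating functional of $\Phi_i$, obtaining
\[
\mathbb{E}[e^{tN_{\gamma i}}]=\exp\!\Big\{-(1-e^{t})\,\pi\lambda_i\Big(\tfrac{P_i}{\gamma\mu}\Big)^{2/\beta}\Gamma\!\big(1+\tfrac{2}{\beta}\big)\Big\},
\]
and then recognizes this as the MGF of a Poisson random variable with the stated mean. Your route is more conceptual: once you observe that the SSR count is an independent location-dependent thinning, the Poisson law is immediate and only the mean needs to be computed, whereas the paper's MGF calculation establishes Poissonness and the mean in one stroke but requires the PGFL machinery. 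Your explicit appeal to Slivnyak's theorem to justify that conditioning on the reference TX leaves the remaining points a PPP is a point the paper passes over silently, and your remark on the independence of $N_{\gamma 1}$ and $N_{\gamma 2}$ is likewise additional rigor not present in the original.
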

\begin{proof}
The moment generation function (MGF) of $N_{\gamma i}$ is
\begin{align}\label{eet}
\mathbb{E}(e^{tN_{{\gamma}i}})&=\mathbb{E}\Big(e^{t\sum_{b_{i,j}\in\Phi_i}\mathbf{1}\left\{P_i h_{i,j}r_j^{-\beta}>\gamma\right\}}\Big)\nonumber\\
&=\mathbb{E}_{\Phi_i}\Big[\prod\nolimits_{b_{i,j}\in\Phi_i}\mathbb{E}_{h_{i,j}}e^{t\mathbf{1}\left\{P_i h_{i,j}r_j^{-\beta}>\gamma\right\}}\Big]\nonumber\\
&\stackrel{(a)}{=}\text{exp}\Big\{-\mathbb{E}_{h_{i,j}}\!\Big[\!\int_0^{2\pi}\!\!\!\!\int_0^{\left(\frac{P_ih_{i,j}}{\gamma}\right)^{\frac{1}{\beta}}}\!\!\!\!\!(1-e^t)\lambda_ir\mathrm{d}{r}\mathrm{d}{\theta}\Big]\!\Big\}\nonumber\\
&\stackrel{(b)}{=}\text{exp}\Big\{-\pi(1-e^t)\lambda_i\mathbb{E}_{h_{i,j}}\Big[{\Big(\frac{P_ih_{i,j}}{\gamma}\Big)^{\frac{2}{\beta}}}\Big]\Big\}\nonumber\\
&\stackrel{(c)}{=}\text{exp}\Big\{-\pi(1-e^t)\lambda_i\Big(\frac{P_i}{ \gamma\mu}\Big)^{\frac{2}{\beta}}\Gamma(1+\frac{2}{\beta})\Big\},
\end{align}
where Step (a) has been proven in \cite{2009interference,CSMA}. Step (b) is obtained by noting that $h_{i,j}\sim\text{Exp}(\mu)$. Step (c) is obtained based on
\begin{align}
\mathbb{E}_{h_{i,j}}\Big[{\Big(\frac{P_ih_{i,j}}{\gamma}\Big)^{\frac{2}{\beta}}}\Big]&=\int_0^{+\infty}\mu e^{-\mu x}\Big(\frac{P_ix}{\gamma}\Big)^{\frac{2}{\beta}}\mathrm{d}{x}\nonumber\\
&=\Big(\frac{P_i}{\gamma\mu}\Big)^{\frac{2}{\beta}}\Gamma\Big(1+\frac{2}{\beta}\Big).
\end{align}
Compared (\ref{eet}) with the MGF of the Poisson distribution, we can find that $N_{\gamma i}$ is distributed according to Poisson process with intensity $\lambda_{\gamma i} = \pi\lambda_i(\frac{P_i}{\gamma\mu})^{\frac{2}{\beta}}\Gamma(1+\frac{2}{\beta})$.
\end{proof}

\subsection{The Delay and the Queue Length}
In this part,  we analyze the queue length of requests at the BS and the D2D TX, as well as the delay of users in different Subsets. Requests that arrive at the same node (a BS or a D2D TX) are queued in a infinite buffer until they can be served, and they are served on a FIFO-basis (first-in, first-out). Consider the strategy that a request will be deleted from the buffer no matter whether the corresponding content has been transmitted completely in a slot scheduled to it (at the cost of increasing the content loss rate), i.e., without repeat transmission.
\subsubsection{\textbf{The delay and the queue length at the BS}}
As to any given BS node, the arrival of received requests follows a Poisson process with intensity $n_2\lambda_u$, conditioned on  the number $N_2=n_2$ of associated users.   Note that the BS have the priority to assigning channels to its received requests at the beginning of each time slot, which is not affected by the state of any D2D TX. The queueing behavior of BSs with high priority is not affected by the presence of D2D TXs with low priority. Then traffic dynamics of request arrivals and departures at a BS is a \emph{discrete-time multiserver queue} (with $|C|$ servers).  Substituting the Poisson arriving process into the result of the reference \cite{bsqueue}, we get the conditional probability generating function (PGF) of the queue length at the BS (note that the conditional PGF here is for the BS such that the queue at the BS is steady, i.e., $n_2\lambda_u<|C|$),
\begin{align}
\mathcal{L}_b(z|N_2=n_2)&= \frac{e^{n_2\lambda_u(z-1)}(|C|-n_2\lambda_u)(z-1)}{z^{|C|}-e^{n_2\lambda_u(z-1)}}\nonumber\\
&\times\prod_{i=1}^{|C|-1} \frac{z-\beta_i}{1-\beta_i},
\end{align}
where the $\beta_i$ ($i = 1,2,...,|C|-1)$ are the $|C|-1$ solutions of $z^{|C|}=e^{n_2\lambda_u(z-1)}$ inside the complex unit disk minus the point $z=1$. We refer readers to \cite{bsqueue} for detail derivations. Then the conditional PMF of the queue length $L_b$ at the BS can be given with different methods, e.g., the  inverse discrete Fourier transform (IDFT) method \cite{bsqueue},
$
\mathbb{P}\left(L_b=n|N_2=n_2\right)\approx\frac{1}{W}\sum\limits_{w=0}^{W-1}\frac{\mathcal{L}_b\left(e^{j2\pi w/W}|N_2=n_2\right)}{e^{j2\pi wn/W}}.
$
This method will yield more accurate approximations for the PMF when choosing larger $W$. Therefore, the PMF of the queue length $L_b$ at the BS is
\begin{align}
\mathbb{P}(L_b=n)=\sum\limits_{n_2=0}^{\lceil{\frac{|C|}{\lambda_u}}\rceil-1}\mathbb{P}(L_b=n|N_2=n_2)\mathbb{P}(N_2=n_2),
\end{align}
where $\lceil x\rceil$ is the the ceiling function. Similarly, the conditional PGF of the delay in the queue at a BS is given by
$\mathcal{D}_b(z|N_2=n_2) = \sum_{k=0}^{|C|-1} \frac{(v_k^{|C|}(z)-1)v_k(z)}{|C|\cdot(v_k(z)-1)} \times \frac{(|C|-n_2\lambda_u)(e^{n_2\lambda_u(v_k(z)-1)}-1)}{n_2\lambda_u(v_k^{|C|}(z)- e^{n_2\lambda_u(v_k(z)-1)}  )} \prod_{i=1}^{|C|-1} \frac{v_k(z)-\beta_i}{1-\beta_i},$
whereby $v_k(z)$ are the $|C|$ different solutions of $v^{|C|}=z$. Then the PMF of the delay $\mathbb{P}(D_b=n)$  can be further obtained.
\subsubsection{\textbf{The delay and the queue length at the D2D TX}}
Due to the large gap between the transmit power of BSs and D2D TXs, the area of the D2D SSR is much smaller than that of the BS SSR. It is reasonable to assume that all D2D TXs coexisting within one D2D SSR share the same BS SSR and the same free channels for tractable analysis \cite{CSMA}. Regard a D2D TX and the other D2D TXs in its D2D SSR as a group, the free channels in the D2D SSR are shared by the group orthogonally. According to the analysis in Sec. \ref{sec:number}, we can get the number of users associated to the group of D2D TXs follows the Poisson distribution with intensity  $\lambda_{\gamma u}=\pi \mathbb{P}_{u1}\lambda_0(\frac{P_1}{\gamma\mu})^\frac{2}{\beta}\Gamma(1+\frac{2}{\beta}) $. Then the PMF $\mathbb{P}(N_{\gamma u}=n)$   of the number of users  associated to  the group of  D2D TXs can be obtained.


For the priority of the BSs, the number of free channels available for the D2D group is the number of channels that are not used by the BS with most received requests to be responded to in the D2D SSR. As to each BS, the requests it received during each time slot follows a Poisson process as aforementioned. We consider the BS with most users associated with it to be the one that occupies most channels. Denote $N_2^j$ as the number of users associated with the $j$-th BS. The conditional CDF of the number of users under the BS with the heaviest load  is $N_{K}$
\begin{align}
\mathbb{P}({N_{K}\leq k|N_{\gamma 2}=n})
&= \mathbb{P} \Big({\max_{j=1,2,...,n}N_{2}^j\leq k}\Big) \nonumber\\
&= \Big[\sum_{i=0}^k \mathbb{P}(N_2=i)\Big]^n.
\end{align}
Then the CDF of the number of users under the BS with the heaviest load is
\begin{align}
\mathbb{P}({N_{K} \leq k})
&=\sum_{n=0}^{+\infty} \mathbb{P}({N_{\gamma2}=n})\mathbb{P}({N_K\leq k|N_{\gamma 2}=n}).
\end{align}

Then the system can be regarded as a two-priority queuing system with $|C|$ servers, i.e., a  \emph{discrete-time multiserver queue with priorities} \cite{1998discrete}. The requests with high-priority (associated to the BS with the heaviest load) arrive according to a Poisson process with intensity $\lambda_H = N_K\lambda_u$. The arrival of requests with low-priority (associated to the group of D2D TXs) follows a Poisson process with intensity $\lambda_L = N_{\gamma_{u}}\lambda_u$. The two processes are independent to each other, but the high-priority requests may lead to server interruptions to the queueing of the low-priority requests.

We provide an  insight into the performance of the D2D tier with the state of the low-priority queueing. Substituting the Poisson process into results of \cite{1998discrete} which is with regard to a general arriving process, we get the conditional PGF for the queue length of requests with low priority
\begin{align}
\mathcal{L}_d(z|N_K, N_{\gamma u})&= \frac{e^{\lambda_L(z-1)}(|C|-\lambda_T)(z-1)}{1-e^{\lambda_L(z-1)}}\nonumber\\
&\times\prod_{i=0}^{|C|-1} \frac{1-x_i(z)}{z-x_i(z)} \prod_{i=1}^{|C|-1}\frac{z-\alpha_i}{1-\alpha_i}.
\end{align}
whereby the $\alpha_i$ ($i = 1,2,...,|C|-1$) are the $|C|-1$ solutions of $z^{|C|}=e^{\lambda_T(z-1)}$ inside the complex unit disk minus the point $z=1$, and we have $\lambda_T=\lambda_H+\lambda_L$. The functions $x_i(z)$ are the $|C|$ solutions for equation $x_i^{|C|}(z) = e^{\lambda_H(x_i(z)-1)+\lambda_L(z-1)}$ for any given $z$. Moreover, the conditional PGF of the delay for low-priority requests is given by
$
\mathcal{D}_d(z|N_K, N_{\gamma u}) = \sum_{i=0}^{|C|-1}r_i(z)Q_T(\omega_i(z)),
$
where $
r_i(z|N_K, N_{\gamma u}) = z \prod_{j=0,j\neq i}^{|C|-1} \frac{1-\omega_j(z)}{\omega_i(z)-\omega_j(z)}, 0 \leq i <|C|,
$ and $
Q_T(z|N_K, N_{\gamma u}) = \frac{(|C|-\lambda_T)\left(e^{\lambda_T(z-1)}-e^{\lambda_H(z-1)}\right)}{\lambda_L\left(z^{|C|}-e^{\lambda_T(z-1)}\right)}
\times\prod_{i=1}^{|C|-1} \frac{z-\alpha_i}{1-\alpha_i},
$
and $\omega_i(z)$ are the solutions of $\omega^{|C|} = ze^{\lambda_H(\omega-1)}$. Similarly, the PMF of the queue length and the delay can be obtained with the  IDFT method described above.

\section{Numerical Results}
We verify the performance of the proposed system with simulating the cache-enabled network. The results are obtained with Monte Carlo methods in a square area of $2000m\times2000m$, where the locations of users and BSs are independent homogeneous PPPs with intensities of $\{\lambda_0, \lambda_2\} = \{\frac{100}{\pi500^2},\frac{1}{\pi500^2}\}$ nodes/m$^2$. We set the path-loss $\beta = 4$, the transmit powers are $\{P_0,P_2\}=\{23,43\}$ dBm, the request intensity $\lambda_u=0.09$ [requests/s],  the number of channels $|C|=10$, total number of contents $N = 200$, the caching ability $M = 10$, and the content popularity $\gamma = 1$. These typical parameters will not change unless specific statements are clarified.
\begin{figure}
  \centering
  \subfigure[]{
    \label{fig:subfig:a} 
    \includegraphics[width=1.8in]{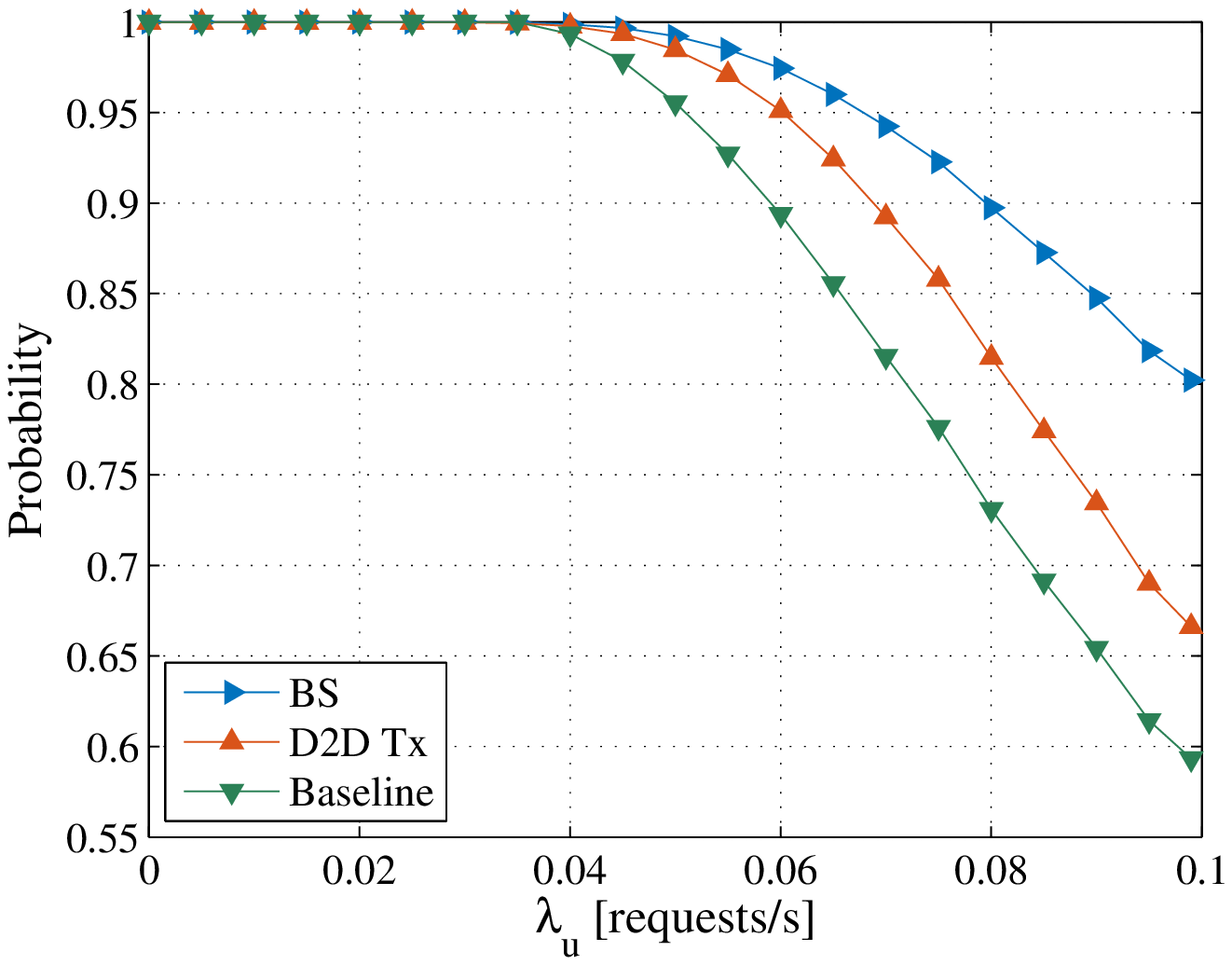}}
  \hspace{-0.3in}
  \subfigure[]{
    \label{fig:subfig:b} 
    \includegraphics[width=1.8in]{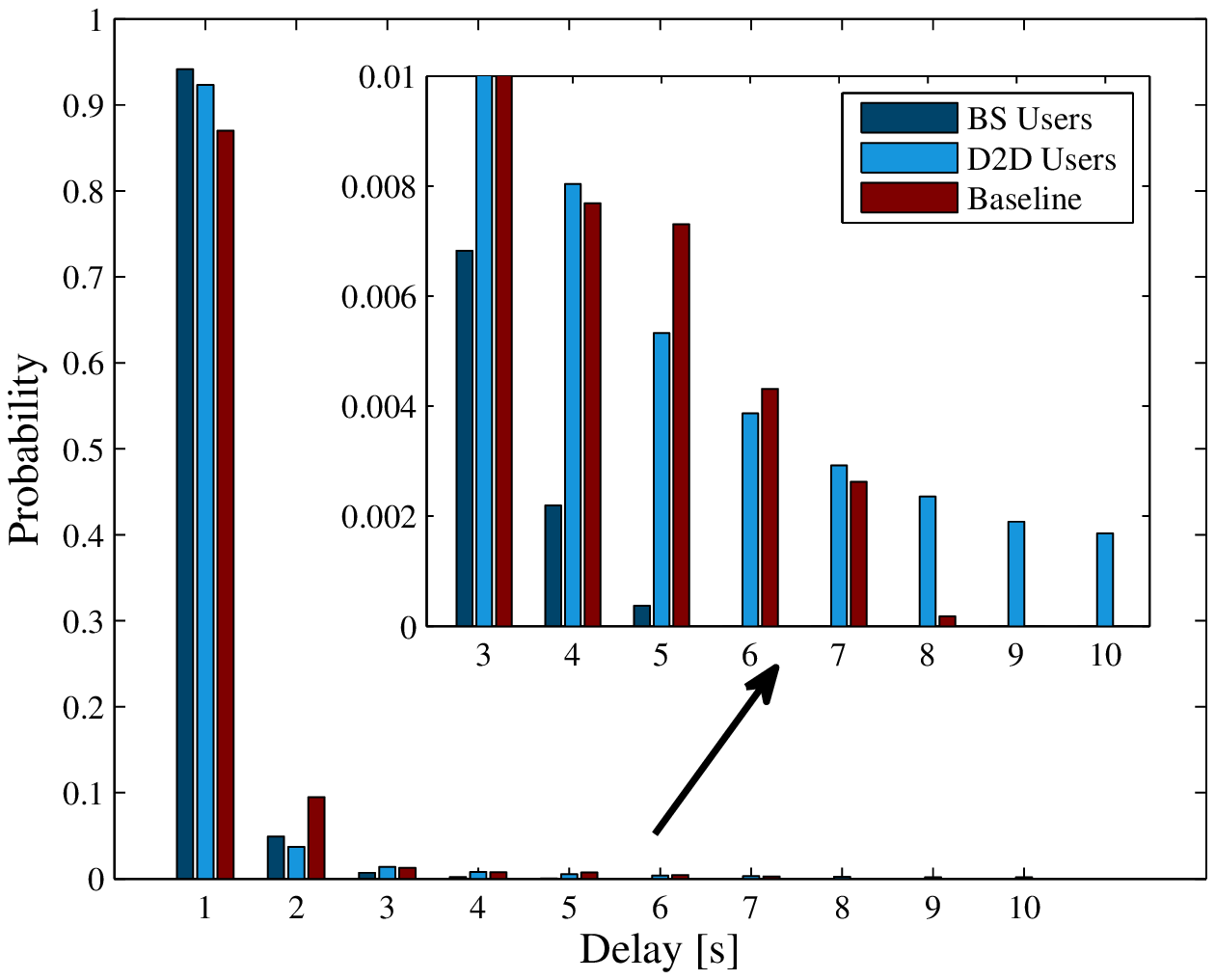}}
\caption{{{Fig.  \ref{fig:subfig:a}: The fraction of the  steady-state service nodes; Fig. \ref{fig:subfig:b}: The PMF of the request delay at the steady service nodes.}}}
  \label{fig:subfig} 
\end{figure}

Fig. \ref{fig:subfig:a} shows the fraction of the steady-state service nodes (BSs or D2D TXs) for the cognitive and cache-enabled network, in comparison with that of the baseline. For a service node, the arrival rate of requests is decided by the number of users in its cell. Due to the stochastic locations of  both users and service nodes, there is possibility that a service node has large number of requests in its cell, yielding congestion. As shown in Fig. \ref{fig:subfig:a}, nearly all service nodes stay at the steady state with lower request intensity. As to the heavy load situation, the fraction of the steady-state service nodes decreases sharply for both the baseline and the proposed network. Moreover, in the proposed network, D2D TXs  are more likely to suffer from local unsteady state because of the lower priority. However, both kinds of service nodes  (BSs or D2D TXs)  in the proposed system can bear heavier traffic load than BSs in the baseline. When the request rate is $0.1$ $[$requests/s$]$, the fraction of the steady  BSs  and D2D TXs are, respectively, $33.3\%$  and $9\%$ higher than that of the steady BSs in the baseline, which illustrates the appealing advantage of the proposed network.

The PMF of the request delay at the steady service nodes  (requests at the steady BSs and the steady D2D TXs are considered as two kinds) are illustrated in Fig. \ref{fig:subfig:b}.  The user under the coverage of steady BSs  in the proposed network has smaller delay compared with the D2D users and the users in the baseline. Furthermore, the delay of users under steady D2D has the distribution with longer tail and larger average than that under the steady BS due to the low priority. However, both kinds of users under steady service nodes  (BSs or D2D TXs) in the proposed network have better performance than that under the steady BSs in the baseline.
\section{Conclusion}
The paper aims to formulate and evaluate the performance of the multi-channel network where the D2D and cellular coexist.  We advocate to pre-cache the most popular contents to cache-enabled user via broadcasting when the network is off-peak, for the future frequent access. A cognitive D2D TX senses the idle cellular spectrum within its spectrum sensing region. Firstly, we model the node locations as mutually independent PPPs. Users dynamically  connect to the cellular and the D2D according to the jointly physical layer and content-centric features. Users are classified into three Subsets according to where the user can obtain the requested content.  Then the request arrivals and departures at the
BS and D2D TX are modeled as the \emph{discrete-time multiserver queue with priorities}. We further provide the PMF of the delay and the queue
length. Numerical results reveal that the fraction of the steady  BSs in  the proposed system is $33.3\%$  higher than that in the baseline.
\newpage
\bibliographystyle{IEEEtran}
\bibliography{paper}

\begin{thebibliography}{10}
\providecommand{\url}[1]{#1}
\csname url@samestyle\endcsname
\providecommand{\newblock}{\relax}
\providecommand{\bibinfo}[2]{#2}
\providecommand{\BIBentrySTDinterwordspacing}{\spaceskip=0pt\relax}
\providecommand{\BIBentryALTinterwordstretchfactor}{4}
\providecommand{\BIBentryALTinterwordspacing}{\spaceskip=\fontdimen2\font plus
\BIBentryALTinterwordstretchfactor\fontdimen3\font minus
  \fontdimen4\font\relax}
\providecommand{\BIBforeignlanguage}[2]{{%
\expandafter\ifx\csname l@#1\endcsname\relax
\typeout{** WARNING: IEEEtran.bst: No hyphenation pattern has been}%
\typeout{** loaded for the language `#1'. Using the pattern for}%
\typeout{** the default language instead.}%
\else
\language=\csname l@#1\endcsname
\fi
#2}}
\providecommand{\BIBdecl}{\relax}
\BIBdecl

\bibitem{jsacd2d}
M.~Khoshkholgh, Y.~Zhang, K.-C. Chen, K.~Shin, and S.~Gjessing, ``{Connectivity
  of Cognitive Device-to-Device Communications Underlying Cellular Networks},''
  \emph{IEEE J. Sel. Areas Commun.}, vol.~33, no.~1, pp. 81--99, Jan. 2015.

\bibitem{cognitive2}
A.~H. Sakr, H.~Tabassum, E.~Hossain, and D.~I. Kim, ``{Cognitive Spectrum
  Access in Device-to-Device-Enabled Cellular Networks},'' \emph{IEEE Commun.
  Mag.}, vol.~53, no.~7, pp. 126--133, July 2015.

\bibitem{cognitive1}
P.~Phunchongharn, E.~Hossain, and D.~I. Kim, ``{Resource Allocation for
  Device-to-Device Communications Underlaying LTE-Advanced Networks},''
  \emph{IEEE Wireless Commun.}, vol.~20, no.~4, pp. 91--100, Aug. 2013.

\bibitem{Zipf}
M.~Cha, H.~Kwak, P.~Rodriguez, Y.~Ahn, and S.~Moon, ``{I Tube, You Tube,
  Everybody Tubes: Analyzing The World's Largest User Generated Content Video
  System},'' in \emph{Proc. ACM SIGCOMM Internet Measurement}, Oct. 2007.

\bibitem{diversity}
H.~Liu, Z.~Chen, X.~Tian, X.~Wang, and M.~Tao, ``{On content-centric wireless
  delivery networks},'' \emph{IEEE Wireless Commun.}, vol.~21, no.~6, pp.
  118--125, Dec. 2014.

\bibitem{underlay1}
H.~Min, J.~Lee, S.~Park, and D.~Hong, ``{Capacity Enhancement Using an
  Interference Limited Area for Device-to-Device Uplink Underlaying Cellular
  Networks},'' \emph{IEEE Trans. Wireless Commun.}, vol.~10, no.~12, pp.
  3995--4000, Dec. 2011.

\bibitem{cite1}
A.~H. Sakr, H.~Tabassum, E.~Hossain, and D.~I. Kim, ``{Cognitive Spectrum
  Access in Device-to-Device-Enabled Cellular Networks},'' \emph{IEEE Commun.
  Mag.}, vol.~53, no.~7, pp. 126--133, July 2015.

\bibitem{harvest}
A.~Sakr and E.~Hossain, ``{Cognitive and Energy Harvesting-Based D2D
  Communication in Cellular Networks: Stochastic Geometry Modeling and
  Analysis},'' \emph{IEEE Trans. Commun.}, vol.~63, no.~5, pp. 1867--1880, May
  2015.

\bibitem{CSMA}
H.~ElSawy and E.~Hossain, ``{Two-Tier HetNets with Cognitive Femtocells:
  Downlink Performance Modeling and Analysis in a Multichannel Environment},''
  \emph{IEEE Trans. Mobile Computing}, vol.~13, no.~3, pp. 649--663, Mar. 2014.

\bibitem{cog2}
H.~Kebriaei, B.~Maham, and D.~Niyato, ``Double sided bandwidth-auction game for
  cognitive device-to-device communication in cellular networks,'' \emph{IEEE
  Trans. Veh. Technol.}, vol.~PP, no.~99, pp. 1--1, 2015.

\bibitem{5G2}
E.~Ba{\c{s}}tu{\u{g}}, M.~Bennis, and M.~Debbah, ``{Living on the Edge: The
  Role of Proactive Caching in 5G Wireless Networks},'' \emph{IEEE Commun.
  Mag.}, vol.~52, no.~8, pp. 82--89, Aug. 2014.

\bibitem{scaling}
N.~Golrezaei, A.~Dimakis, and A.~Molisch, ``{Scaling Behavior for
  Device-to-Device Communications With Distributed Caching},'' \emph{IEEE
  Trans. Inf. Theory}, vol.~60, no.~7, pp. 4286--4298, July 2014.

\bibitem{grid}
N.~Golrezaei, P.~Mansourifard, A.~Molisch, and A.~Dimakis, ``{Base-Station
  Assisted Device-to-Device Communications for High-Throughput Wireless Video
  Networks},'' \emph{IEEE Trans. Wireless Commun.}, vol.~13, no.~7, July 2014.

\bibitem{arxiv2015}
M.~Afshang, H.~S. Dhillon, and P.~H.~J. Chong, ``{Fundamentals of
  Cluster-Centric Content Placement in Cache-Enabled Device-to-Device
  Networks},'' \emph{{[online] http://arxiv.org/abs/1509.04747}}, Sept. 2015.

\bibitem{trade-off}
A.~Altieri, P.~Piantanida, L.~R. Vega, and C.~Galarza, ``{On Fundamental
  Trade-offs of Device-to-Device Communications in Large Wireless Networks},''
  \emph{IEEE Trans. Wireless Commun.}, vol.~PP, no.~99, pp. 1--1, May 2015.

\bibitem{ppp2}
J.~Andrews, F.~Baccelli, and R.~Ganti, ``{A Tractable Approach to Coverage and
  Rate in Cellular Networks},'' \emph{IEEE Trans. Commun.}, vol.~59, no.~11,
  pp. 3122--3134, Nov. 2011.

\bibitem{Queue}
M.~Karray and M.~Jovanovic, ``{A Queueing Theoretic Approach to the
  Dimensioning of Wireless Cellular Networks Serving Variable-Bit-Rate
  Calls},'' \emph{IEEE Trans. Veh. Technol.}, vol.~62, no.~6, pp. 2713--2723,
  July 2013.

\bibitem{twcmy}
C.~Yang, Y.~Yao, Z.~Chen, and B.~Xia, ``{Analysis on Cache-enabled Wireless
  Heterogeneous Networks},'' \emph{IEEE Trans. Wireless Commun.}, vol.~15,
  no.~1, pp. 131--145, Jan. 2016.

\bibitem{2009interference}
M.~Haenggi and R.~K. Ganti, \emph{{Interference in Large Wireless
  Networks}}.\hskip 1em plus 0.5em minus 0.4em\relax Now Publishers Inc, 2009.

\bibitem{bsqueue}
H.~Bruneel and B.~G. Kim, \emph{{Discrete-time Models for Communication Systems
  Including ATM}}.\hskip 1em plus 0.5em minus 0.4em\relax Springer Science \&
  Business Media, 2012, vol. 205.

\bibitem{1998discrete}
K.~Laevens and H.~Bruneel, ``{Discrete-time Multiserver Queues with
  Priorities},'' \emph{Performance Evaluation}, vol.~33, no.~4, pp. 249--275,
  1998.

\end{thebibliography}
\end{document}